\newtheorem{theorem}{Theorem}[section]
\newtheorem{lemma}[theorem]{Lemma}
\newtheorem*{statement*}{Statement}
\newtheorem*{theorem*}{Theorem}
\newtheorem*{lemma*}{Lemma}
\newtheorem*{fact*}{Fact}
\theoremstyle{definition}
\newtheorem{definition}[theorem]{Definition}
\newtheorem*{definition*}{Definition}
\newtheorem*{example*}{Example}
\newtheorem*{exercise*}{Exercise}
\newtheorem{proposition}[theorem]{Proposition}
\newtheorem*{proposition*}{Proposition}
\newtheorem{corollary}[theorem]{Corollary}
\newtheorem*{corollary*}{Corollary}
\newtheorem{claim}[theorem]{Claim}
\newtheorem*{claim*}{Claim}
\newtheorem*{test*}{Test}
\theoremstyle{remark}
\newtheorem{remark}[theorem]{Remark}
\newtheorem*{remark*}{Remark}
\newcommand{\l@abcd}[2]{\hbox to\textwidth{#1\dotfill #2}}
\newcommand{\F}{\mathbb{F}_2}
\newcommand{\xor}{\oplus}
\newcommand{\e}{\varepsilon}
\newcommand{\set}[1]{\left\{#1 \right\}}
\newcommand{\etalchar}[1]{$^{#1}$}
\def\dist{{\rm dist}}
\def\S{{\cal S}}
\title{Direct Sum Testing: \\ The General Case}
\author[1]{Irit Dinur}
\author[2]{Konstantin Golubev}
\affil[1]{The Weizmann Institute of Science, Israel}
\affil[2]{ETH Zurich, Switzerland}
\begin{document}

\maketitle

\abstract{
A function $f:[n_1]\times\dots\times[n_d]\to\F$  is a direct sum if it is of the form $f\left(a_1,\dots,a_d\right) = f_1(a_1)\xor\dots \xor f_d (a_d),$ for some $d$ functions $f_i:[n_i]\to\F$ for all $i=1,\dots, d$, and where $n_1,\dots,n_d\in\mathbb{N}$. We present a $4$-query test which distinguishes between direct sums and functions that are far from them. The test relies on the BLR linearity test (Blum, Luby, Rubinfeld, 1993) and on an agreement test which slightly generalizes the direct product test (Dinur, Steurer, 2014).

In multiplicative $\pm 1$ notation, our result reads as follows. A $d$-dimensional tensor with $\pm 1$ entries is called a tensor product if it is a tensor product of $d$ vectors with $\pm 1$ entries, or equivalently, if it is of rank $1$. The presented tests can be read as tests for distinguishing between tensor products and tensors that are far from being tensor products.

We also present a different test, which queries the function at most $(d+2)$ times, but is easier to analyze.
}

\section{Introduction}

Let us first fix some notations and definitions. By $[n]$ we mean the set $\{0,1,2,\dots, n\}$. For $d$ positive integers $n_1,\dots,n_d$, we denote $[\overline{n};d] = [n_1]\times\dots\times[n_d]$. For two functions $F, G: X\to Y$, we denote by $\dist(F,G)$ the relative Hamming distance between them, namely $\dist(F,G) = \Pr_{x\in X}[ F(x)\neq G(x)]$.  We say that $F:X\to Y$ is $\e$-close to have some Property, if there exists a function $G: X\to Y$ such that $g$ has the Property and $\dist(F,G)\leq \e$.

Given $d$ functions $f_i:[n_i]\to\F,\, i=1,\dots,d$, where $n_1,\dots,n_d\in\mathbb{N}$, their direct sum is the function $f:[\overline{n};d]\to\F$ given by $f\left(a_1,\dots,a_d\right) = f_1(a_1) \xor f_2(a_2) \xor \ldots \xor f_d(a_d)$, where $\xor$ stands for  addition is in the field $\F$. We denote $f = f_1 \oplus \cdots \oplus f_d$. We study the testability question: given a function $f: [\overline{n};d]\to\F$ test if it is a direct sum, namely if it belongs to the set
\[ 
DirectSum_{[\overline{n};d]} = \left\{ f_1\oplus\cdots \oplus f_d\;|\; f_i:[n_i]\to\F,\, i=1,\dots, d \right\}.
\]

Direct sum is a natural construction that is often used in complexity for hardness amplification \cite{Yao82, ImpagliazzoJK06, ImpagliazzoJKW08, Sudan2001, Trevisan2003}. It is related to the direct product construction: a function $f:[\overline{n};d]\to\F^d$ is the direct product of $f_1,\ldots,f_d$ as above if $f\left(a_1,\dots,a_d\right) = (f_1(a_1),\ldots,f_d(a_d))$ for all $(a_1,\ldots,a_d)\in[\overline{n};d]$. The testability of direct products has received attention \cite{GolSaf97,DR06,DG08,ImpagliazzoKW12,Dinur2014} as abstraction of certain PCP tests. It was not surprising to find \cite{David2015} that there is a connection between testing direct products to testing direct sum. However, somewhat unsatisfyingly this connection was confined to testing a certain type of {\em symmetric} direct sum. A symmetric direct sum is a function $f:[n]^d\to\F$ that is a direct product with all components equal; namely such that there is a single $g : [n]\to\F$ such that
\[f\left(a_1,\dots,a_d\right) = g(a_1)\xor g(a_2)\xor \cdots \xor g(a_d).\]
In \cite{David2015}, a 3-query test was presented for testing if a given $f$ is a symmetric direct sum, and the analysis carried out relying on the direct product test. It was left as an open question to devise and analyze a test for the property of being a (not necessarily symmetric) direct sum.

We design and analyze a four-query test which we call the ``square in a cube'' test, and show that it is a strong absolute local test for being a direct sum. That is, the number of queries is an absolute constant (namely, $4$), and the distance from a function to the subspace of direct sums is bounded by some absolute constant (independent of $n$ and $d$) times the probability of the failure of the test on this function. 
We also describe a simpler $(d+1)$-query test, whose easy analysis we defer to section~\ref{sec:shapkatest}.

In order to define the test, we need to introduce the following notation. Given two strings $a,b\in [\overline{n};d]$ and a set $S\subseteq [d]$, denote by $a_S b$ the string in $[\overline{n};d]$ whose $i$-th coordinate equals $a_i$ if $i\in S$ and $b_i$ otherwise.

\begin{figure}[h]\captionof{ctest}{Square in a Cube test. Given a query access to a function $f: [\overline{n};d]\to\F$:}\label{test:SiC-intro}
\begin{enumerate}
\item Choose $a,b\in [\overline{n};d]$ uniformly at random.
\item Choose two subsets $S,T\subset [d]$  uniformly at random, and let $U = S\triangle T$ be their symmetric difference.
\item Accept iff  \[f(a) \xor f(a_Sb) \xor f(a_Tb) \xor f(a_Ub) = 0.\]
\end{enumerate}
\end{figure}

We prove the following theorem for Test~\ref{test:SiC-intro}.
\begin{theorem}[Main]\label{thm:main}
There exists an absolute constant $c>0$ s.t. for all $d\in \mathbb{N}$ and $n_1,\dots,n_d\in \mathbb{N}$, given $f:[\overline{n};d]\to\F$,
\[ \dist (f, DirectSum_{[\overline{n};d]}) \le c\cdot \Pr_{a,b,S,T}[f(a) \xor f(a_Sb) \xor f(a_Tb) \xor f(a_{S\triangle T}b) \neq 0]
\]
\noindent where $a,b$ are chosen independently and uniformly from the domain of $f$, and $S,T$ are random subsets of $[d]$.
\end{theorem}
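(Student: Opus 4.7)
The plan is to interpret the test, for each fixed pair $(a,b)$, as a BLR linearity test on the reparametrization of $f$ by the combinatorial $d$-cube $\{a_Sb:S\subseteq[d]\}$, then extract linear coefficients via BLR and glue them using a direct-product-style agreement test into the global functions $f_1,\ldots,f_d$.

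Identify each subset $S\subseteq[d]$ with its characteristic vector $\chi_S\in\F^d$ and, for each pair $(a,b)\in[\overline n;d]^2$, define $g_{a,b}:\F^d\to\F$ by $g_{a,b}(\chi_S)=f(a_Sb)$. The four queried values $f(a),f(a_Sb),f(a_Tb),f(a_{S\triangle T}b)$ are values of $g_{a,b}$ at four corners of an affine $2$-plane inside the cube, and the accept condition is exactly a BLR-type linear dependence for $g_{a,b}$. So the Square-in-a-Cube test decomposes as: \emph{pick $(a,b)$ uniformly, then run BLR on $g_{a,b}$}. Let $\e:=\Pr[\text{test rejects}]$; by Markov's inequality, a $(1-\sqrt\e)$-fraction of pairs $(a,b)$ are \emph{good}, meaning the internal BLR test rejects with probability at most $\sqrt\e$. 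For every good pair, the BLR theorem produces a constant $c_{a,b}\in\F$ and coefficients $\alpha_1^{a,b},\ldots,\alpha_d^{a,b}\in\F$ with
\[f(a_Sb)\;=\;c_{a,b}\xor\bigoplus_{i\in S}\alpha_i^{a,b}\]
for a $(1-O(\sqrt\e))$-fraction of subsets $S$.

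The heart of the proof, and the step I expect to be the main obstacle, is showing that each $\alpha_i^{a,b}$ depends (essentially) only on the pair $(a_i,b_i)\in[n_i]^2$. For a genuine direct sum, $\alpha_i^{a,b}=f_i(a_i)\xor f_i(b_i)$, so the vector $\alpha^{a,b}\in\F^d$ carries a coordinate-wise ``direct product'' structure. Deducing this structure from the local BLR decodings is itself an agreement test of direct-product type, and is precisely where we invoke the promised generalization of the Dinur--Steurer direct product test. The generalization is needed because the index sets $[n_i]$ may have different sizes and because each coordinate carries information about a \emph{pair} $(a_i,b_i)$ rather than a singleton. The output is a family of functions $\phi_i:[n_i]^2\to\F$ with $\phi_i(a_i,b_i)=\alpha_i^{a,b}$ for almost all $(a,b)$.

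Given the $\phi_i$, a short cocycle argument finishes the proof. Since $g_{a,a}$ is constant, $\phi_i(u,u)=0$; and comparing the local decodings along triangles of pairs $(a,b),(b,c),(a,c)$ yields the cocycle relation $\phi_i(u,w)=\phi_i(u,v)\xor\phi_i(v,w)$. Fixing any base point $v_0\in[n_i]$ and setting $f_i(u):=\phi_i(u,v_0)$ then forces $\phi_i(u,v)=f_i(u)\xor f_i(v)$. Substituting back into the local decoding, and absorbing an unavoidable global constant into one of the $f_i$, shows that $f$ agrees with $f_1\oplus\cdots\oplus f_d$ on all but an $O(\sqrt\e)$-fraction of inputs in $[\overline n;d]$, which is precisely the conclusion of the theorem.
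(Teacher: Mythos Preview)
Your high-level two-stage plan (local BLR on each combinatorial cube, then a direct-product/agreement test to glue the local linear pieces) is exactly the paper's strategy. But your execution diverges from the paper's in a way that both complicates the argument and weakens the conclusion.

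The paper's key simplification, which you miss, is to \emph{fix a single base point}: by averaging there is an $a$ for which the expected BLR rejection over $b$ is at most $\e$, and after reindexing one may take $a=\overline 0$ and $f(a)=0$. With $a$ fixed, the affine term $c_{a,b}$ vanishes, the BLR output is a single set $S(b)\subseteq[d]$, and the gluing becomes a direct product test for the map $b\mapsto S(b)\in\F^d$ over $\prod_i[n_i]$. This eliminates both the need to run the agreement test over $\prod_i[n_i]^2$ and the entire cocycle step you propose at the end. Your cocycle argument is not obviously sound as stated: the identities $\phi_i(u,u)=0$ and $\phi_i(u,w)=\phi_i(u,v)\xor\phi_i(v,w)$ are asserted from approximate local decodings that hold only on most inputs, and the diagonal $\{(u,u)\}$ has negligible measure in $[n_i]^2$, so you cannot read off $\phi_i(u,u)$ from the direct-product conclusion.

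Two further gaps. First, your early Markov step (passing to ``good'' pairs with BLR error $\le\sqrt\e$) already loses a square root, so your outline proves only $\dist(f,\mathrm{DirectSum})=O(\sqrt\e)$, not the linear bound $O(\e)$ the theorem asserts; the paper avoids this by never thresholding on $\e_b$ and instead applying Markov once with a fixed threshold ($1/3$) inside the agreement step. Second, the step you correctly flag as the ``main obstacle''---showing that the BLR-decoded coefficient vectors actually satisfy the hypothesis of the direct product test---is where the real work lies, and your proposal contains no plan for it. In the paper this is done by coupling $b$ with a perturbation $b'$ (agreeing on each coordinate with probability $3/4$), reinterpreting the BLR closeness as closeness under a $\nicefrac{2}{3}$-biased measure on the common subcube, and then using a character-vanishing fact (if $\Pr_{x\sim\mu_{2/3}}[\chi_S(x)=0]>\tfrac{2}{3}$ then $S=\emptyset$) to force coordinate-wise agreement $F(b)|_A=F(b')|_A$. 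Without an argument of this kind, invoking the direct product theorem is not justified.
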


Our proof, similarly to \cite{David2015}, relies on a combination of the BLR linearity testing theorem \cite{Blum1993} and a direct product test, similar to the one analyzed in \cite{Dinur2014}. These two components were also used in the proof of  \cite{David2015} for the symmetric case, but here we use the components differently. The trick is to find the right combination. We first observe that once we fix $a,b$, the test is confined to a set of at most $2^d$ points in the domain, and can be viewed as performing a BLR (affinity rather than linearity) test on this piece of the domain. From the BLR theorem, we deduce an affine linear function on this piece. The next step is to combine the different affine linear functions, one from each piece, into one global direct sum, and this is done by reducing to direct product.

\paragraph{Testing if a tensor has rank $1$.}
An equivalent way to formulate our question is as a test for whether a $d$-dimensional tensor with $\pm 1$ entries has rank $1$. Indeed moving to multiplicative notation and writing $h_i = (-1)^{f_i}$ and $h = (-1)^f$,  we are asking whether there are $h_1,\ldots,h_d$ such that
\[ h = h_1\otimes \cdots\otimes h_d.
\]
Denoting
\[ TensorProduct_{[\overline{n};d]} = \left\{ h_1\otimes\cdots \otimes h_d\;|\; h_i:[n_i]\to\{-1,1\},
, i=1,\dots,d\right\}\]
we have
\begin{corollary}
There exists an absolute constant $c>0$ s.t. for all $d\in \mathbb{N}$ and $n_1,\dots,n_d\in \mathbb{N}$ , for every
$h:[\overline{n};d]\to\{-1,1\}$,
\[ \dist (h, TensorProduct_{[\overline{n};d]}) \le c\cdot \Pr_{a,b,S,T}[ h(a)\cdot h(a_Sb) \cdot h(a_Tb)\cdot h(a_{S\triangle T}b) \neq 1].
\]
\end{corollary}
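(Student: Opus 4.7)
The plan is to derive the corollary by transporting Theorem~\ref{thm:main} along the standard bijection between $\F$-valued and $\{\pm 1\}$-valued functions. Given $h:[\overline{n};d]\to\{-1,1\}$, I would set $f:[\overline{n};d]\to\F$ by $f(x)=0$ if $h(x)=1$ and $f(x)=1$ if $h(x)=-1$, so that $h=(-1)^f$. This is a bijection, and the whole argument reduces to checking that it intertwines the three pieces of data appearing in the two inequalities: Hamming distance, the ``product of four values'' test statistic, and the structured subset (direct sum / tensor product).

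The three verifications are routine. First, the encoding is a Hamming isometry: $h(x)=h'(x)$ iff $f(x)=f'(x)$, so $\dist(h,h')=\dist(f,f')$ for any pair of functions $h,h'$ and their preimages $f,f'$. Second, for $h_i:[n_i]\to\{-1,1\}$ with preimages $f_i$, the tensor product evaluates as $(h_1\otimes\cdots\otimes h_d)(a_1,\ldots,a_d)=\prod_i h_i(a_i)=(-1)^{f_1(a_1)\xor\cdots\xor f_d(a_d)}$, so $h\in TensorProduct_{[\overline{n};d]}$ iff $f\in DirectSum_{[\overline{n};d]}$; combined with the isometry this gives $\dist(h,TensorProduct_{[\overline{n};d]})=\dist(f,DirectSum_{[\overline{n};d]})$. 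Third, for any four domain points $x_1,x_2,x_3,x_4$ one has $h(x_1)h(x_2)h(x_3)h(x_4)=(-1)^{f(x_1)\xor f(x_2)\xor f(x_3)\xor f(x_4)}$, so the product equals $1$ iff the XOR equals $0$; instantiating this at $(a,a_Sb,a_Tb,a_{S\triangle T}b)$ shows that the test-failure probabilities in Theorem~\ref{thm:main} and in the corollary agree on the nose.

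Plugging these three identities into Theorem~\ref{thm:main} applied to $f$ yields the corollary with the same absolute constant $c$. There is no real obstacle: the content of the corollary is a pure change of notation from additive $\F$ to multiplicative $\{\pm 1\}$, and all one needs is to make each of the three correspondences above explicit rather than implicit.
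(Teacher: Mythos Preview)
Your proposal is correct and matches the paper's own approach: the corollary is stated immediately after the observation that moving to multiplicative notation via $h=(-1)^{f}$ (and $h_i=(-1)^{f_i}$) turns direct sums into rank-one tensors, and no separate proof is given because it is precisely this change of notation. Your three verifications (Hamming isometry, $DirectSum\leftrightarrow TensorProduct$, and equality of the test-failure events) are exactly what is implicit in that paragraph, so there is nothing to add.
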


\paragraph{Structure of the Paper.} 
In Sections~\ref{sec:SiC-test} and~\ref{sec:shapkatest} we present two different approaches for testing whether a $d$-dimensional binary tensor is a tensor product. In Section~\ref{sec:FurDirec} we discuss possible directions for future research. In Section~\ref{sec:DP}, we explain how to derive the specific direct product test that we need from the agreement testing theorem of \cite{DD}. This is used in the course of the proof in Section~\ref{sec:SiC-test}. The numbering is section-wise.
Finally, in Section~\ref{sec:FurDirec} we discuss possible directions for future research. 

\section{Square in a Cube Test}\label{sec:SiC-test}
In this section we present the Square in a Cube Test. Then we introduce the required background: the BLR test for a function being Affine in Subsection~\ref{subsec:BLR}, the direct product test in Subsection~\ref{subsec:DPT}. Finally, in Subsection~\ref{subsec:sic-proof} we prove the main result on the test.

We start by introducing some notation.

Given two vectors $a = (a_1,\dots,a_d),\,b = (b_1,\dots,b_d)\in[\overline{n};d]$, define
\begin{itemize}
\item  $\Delta(a,b)=\{i: a_i\neq b_i\}\subseteq [d]$;
\item the induced subcube $C_{a,b}$ is the binary cube $\F^{\Delta(a,b)}$;
\item the projection map $\rho_{a,b}:C_{a,b}\to[\overline{n};d]$ defined for $x\in C_{a,b}$ as
\begin{equation}\nonumber
\rho_{a,b}\left(x\right)_i =\begin{cases}
										a_i=b_i,\, & i\not\in \Delta(a,b);\\
										b_i,\, & i\in \Delta(a,b)\text{ and }x_i = 1;\\
										a_i,\, & i\in \Delta(a,b)\text{ and }x_i = 0;
									 \end{cases}
\end{equation}
\end{itemize}

The following test is the same as Test~\ref{test:SiC-intro} in Introduction.

\begin{figure}[h]\captionof{ctest}{Square in a Cube test. Given a query access to a function $f: [\overline{n};d]\to\F$:}\label{test:sic}
\begin{enumerate}
\item Choose $a,b\in [\overline{n};d]$ uniformly at random.
\item Choose $x,y\in C_{a,b}$ uniformly at random.
\item Query $f$ at $\rho_{a,b}(0),\rho_{a,b}(x),\rho_{a,b}(y)$ and $\rho_{a,b}(x\oplus y)$.
\item Accept iff $f(\rho_{a,b}(0))\oplus f(\rho_{a,b}(x))\oplus f(\rho_{a,b}(y))\oplus f(\rho_{a,b}(x\oplus y))=0$.
\end{enumerate}
\end{figure}

\begin{theorem}\label{thm:sic}
Suppose a function $f:[\overline{n};d]^d\to\F$ passes Test~\ref{test:sic} with probability $1-\e$ for some $\e > 0$, then $f$ is $O(\e)$-close to a tensor product.
\end{theorem}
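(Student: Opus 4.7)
The plan is to follow the hint the authors give in the introduction: apply BLR locally on each induced subcube $C_{a,b}$ to extract affine structures, then stitch these local affine functions into one global direct sum via a direct product (agreement) test.

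\textbf{Step 1 (Local BLR).} I would first notice that if we fix $a,b\in[\overline{n};d]$, then Step 3 of Test~\ref{test:sic} is literally the BLR affinity test applied to the restricted function $h_{a,b}:C_{a,b}\to\F$ defined by $h_{a,b}(x):=f(\rho_{a,b}(x))$. Writing $\e(a,b)$ for the rejection probability conditional on $(a,b)$, we have $\mathbb{E}_{a,b}\,\e(a,b)=\e$. The BLR affinity theorem (Subsection~\ref{subsec:BLR}) then says that for each such $(a,b)$ there is a unique affine function
\[
\tilde h_{a,b}(x)\;=\;c_{a,b}\;\xor\;\bigoplus_{i\in\Delta(a,b)}\alpha^{a,b}_i\,x_i
\]
on $C_{a,b}$ which is $O(\e(a,b))$-close to $h_{a,b}$. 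In particular, by majority decoding one may take $c_{a,b}$ and the coefficients $\alpha^{a,b}_i$ to be determined by $f$ itself.

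\textbf{Step 2 (What the coefficients should look like).} A quick sanity check: if $f=f_1\xor\cdots\xor f_d$, then $c_{a,b}=f(a)$ and $\alpha^{a,b}_i=f_i(a_i)\xor f_i(b_i)$ for $i\in\Delta(a,b)$. So each local affine function records, coordinate-wise, the one-dimensional "slice" of some candidate $f_i$. The goal of the remaining steps is to show that this per-coordinate data is globally consistent.

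\textbf{Step 3 (Reduction to a direct product / agreement test).} Here is the main technical step. For each pair $(a,b)$, view the local data as an assignment on $\Delta(a,b)$; explicitly, for every $i\in\Delta(a,b)$ assign the pair of values $(\alpha^{a,b}_i\text{ and }c_{a,b})$, or equivalently a value at $a_i$ and a value at $b_i$ for the $i$-th tentative coordinate function. I would then run the agreement test from Section~\ref{sec:DP} (derived from the theorem of \cite{DD}): it samples two pairs $(a,b)$ and $(a',b')$ sharing appropriate overlap and checks that the induced local affine functions agree on that overlap. The core claim to prove is that whenever Test~\ref{test:sic} passes with probability $1-\e$, the above assignments pass the agreement test with probability $1-O(\e)$. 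This reduction is done by relating the agreement of two local affine functions on their overlap to an instance of the four-point check, so that disagreements on overlaps translate back into failures of Test~\ref{test:sic}. The direct product theorem then yields candidate functions $f_1,\ldots,f_d$ such that a $1-O(\e)$ fraction of pairs $(a,b)$ have their $\tilde h_{a,b}$ equal to the restriction of $f_1\xor\cdots\xor f_d$ to $C_{a,b}$.

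\textbf{Step 4 (Closing the loop).} Combining the agreement conclusion with the BLR closeness of $h_{a,b}$ to $\tilde h_{a,b}$, and averaging over $(a,b)$, the point $\rho_{a,b}(x)$ is distributed almost uniformly on $[\overline{n};d]$ as $(a,b,x)$ vary. So $f$ disagrees with $f_1\xor\cdots\xor f_d$ on at most an $O(\e)$ fraction of the domain, which is the desired bound.

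The main obstacle I anticipate is Step 3. The subcubes $C_{a,b}$ vary in dimension with $|\Delta(a,b)|$, so the data one feeds into the agreement test does not sit on a uniform $k$-subset structure but on a random subset whose size is concentrated around $d/2$; one must verify that the specific agreement tester from \cite{DD} (as packaged in Section~\ref{sec:DP}) applies in this slightly nonstandard setting, and that the failure probability really is bounded by the four-query test's failure probability rather than some larger function of it. Once that reduction is in place, Steps 1, 2, and 4 are essentially averaging arguments.
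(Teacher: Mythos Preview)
Your high-level plan (local BLR on subcubes, then agreement to stitch) matches the paper, but you miss the single move that makes Step~3 tractable: the paper first \emph{fixes one endpoint} $a$. By averaging, there is some $a$ (wlog $a=0$, $f(a)=0$) for which the conditional rejection probability is at most $\e$; from then on only $b$ varies. BLR on $C_{a,b}$ produces a set $S(b)\subseteq\Delta(a,b)\subseteq[d]$, and the paper packages this as a single function $F:[\overline{n};d]\to\F^d$, $F(b)=\mathbf{1}_{S(b)}$. Now the agreement question is exactly the generalized direct product test of Section~\ref{sec:DP}: pick $b$ uniformly, resample each coordinate independently with probability $1/4$ to get $b'$, and check $F(b)|_{\{i:b_i=b'_i\}}=F(b')|_{\{i:b_i=b'_i\}}$. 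There is no need to handle varying subcube dimensions or pairs $(a,b)$ versus $(a',b')$; the ``nonstandard setting'' you anticipate simply does not arise once $a$ is frozen.

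The second idea you are missing is how to pass from ``$f_b$ and $f_{b'}$ are both close to characters'' to ``$S(b)$ and $S(b')$ agree \emph{exactly} on $\{i:b_i=b'_i\}$'' --- the direct product test needs exact agreement, not approximate. The paper does this via Proposition~\ref{prop:emptyset}: it engineers a distribution $\mathcal D_{b,b'}$ on the common subcube under which a sample $x$ hits each coordinate with bias $2/3$, shows that $\chi_{S(b)}(x)=\chi_{S(b')}(x)$ with probability $>2/3$ whenever $\e_{b,b'}+\e_{b',b}\le 1/3$, and concludes $S(b)\triangle S(b')=\emptyset$ on the overlap because a nontrivial character has bias at most $1/3$ under $\mu_{2/3}$. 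Markov then gives that this exact agreement holds for a $1-O(\e)$ fraction of pairs $(b,b')$, which is what feeds into Theorem~\ref{thm:ds}. Your Step~3 gestures at ``relating the agreement \ldots\ to an instance of the four-point check'' but never supplies this rounding step, and without it the reduction does not go through.
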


\subsection{The BLR affinity test}\label{subsec:BLR}
The Blum-Luby-Rubinfeld linearity test was introduced in~\cite{Blum1993}, where its remarkable properties were proven. Later a simpler proof via  Fourier analysis was presented, e.g. see~\cite{BCHKS95}. Below we give a variation of this test for affine functions, see \cite[Chapter 1]{Odonnellbook}.

\begin{definition}\label{def-affine}
A function $g:\F^d \to \F$ is called affine, if there exists a set $S\subseteq [d]$ and a constant $c\in\F$ such that for every vector $x\in \F^d$
$$
g(x) = c\oplus \bigoplus _{i\in S} x_i.
$$
\end{definition}
Note that (see \cite[Exercise 1.26]{Odonnellbook}) a function $g$ is affine iff for any two vectors $x,y\in \F^d$ it satisfies
\begin{equation}\label{prop-affine}
g(0)\oplus g(x)\oplus g(y)\oplus g(x\oplus y) = 0.
\end{equation}

The BLR test implies that if a function $g:\F^d \to \F$ satisfies~(\ref{prop-affine}) with high probability, then it is close to an affine function.

\begin{figure}[h]\captionof{ctest}{The BLR affinity test. Given a query access to a function $f: \F^d\to\F$:}\label{test:blr}
\begin{enumerate}
\item Choose $x\sim \F^d$ and $y\sim \F^d$ independently and uniformly at random.
\item Query $g$ at $0,x,y$ and $x\oplus y$.
\item Accept if $g(0)\oplus g(x)\oplus g(y)\oplus g(x\oplus y) = 0$.
\end{enumerate}
\end{figure}

\begin{theorem}[\cite{Blum1993}]\label{thm:BLR} Suppose $g:\F^d\to\F$ passes the affinity test with probability $1-\e$ for some $\e > 0$.
Then $g$ is $\e$-close to being affine.
\end{theorem}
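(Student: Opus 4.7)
The plan is to reduce the affinity test for $g$ to the standard BLR linearity test, and then invoke the classical Fourier-analytic proof of the latter. Define $g'(x) := g(x) \oplus g(0)$, so that $g'(0)=0$. A direct substitution shows
\[
g'(x) \oplus g'(y) \oplus g'(x \oplus y) \;=\; g(x) \oplus g(y) \oplus g(x\oplus y) \oplus g(0),
\]
so the affinity condition (\ref{prop-affine}) for $g$ at $(x,y)$ holds \emph{iff} the linearity condition $g'(x) \oplus g'(y) = g'(x\oplus y)$ holds. Since the input distribution on $(x,y)$ is the same in both tests, $g'$ passes the BLR linearity test with probability exactly $1-\e$. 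Moreover, if $g'$ is $\e$-close to a linear function $\ell$, then $g$ is $\e$-close to the affine function $\ell \oplus g(0)$, which is what we want. So it suffices to prove the linearity version: a function $g'$ with $g'(0)=0$ that satisfies the linearity identity with probability $1-\e$ is $\e$-close to some $\F$-linear function.

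For the linearity statement, pass to $\pm 1$ notation by setting $\chi(x) = (-1)^{g'(x)}$ and expanding in the Walsh--Fourier basis $\chi(x) = \sum_{S \subseteq [d]} \hat\chi(S)\,\chi_S(x)$, where $\chi_S(x) = (-1)^{\sum_{i\in S} x_i}$. The acceptance probability of the test rewrites as
\[
1-\e \;=\; \Pr_{x,y}\bigl[\chi(x)\chi(y)\chi(x\oplus y) = 1\bigr] \;=\; \tfrac12 + \tfrac12\,\mathbb{E}_{x,y}\bigl[\chi(x)\chi(y)\chi(x\oplus y)\bigr].
\]
Substituting the Fourier expansion and using orthonormality of the characters (each sum collapses when the three subset indices coincide) gives the standard identity
\[
\mathbb{E}_{x,y}\bigl[\chi(x)\chi(y)\chi(x\oplus y)\bigr] \;=\; \sum_{S} \hat\chi(S)^3.
\]
Combining with Parseval's identity $\sum_S \hat\chi(S)^2 = 1$ and the trivial bound $\sum_S \hat\chi(S)^3 \le (\max_S \hat\chi(S))\cdot \sum_S \hat\chi(S)^2$, we conclude that there exists $S^\star$ with $\hat\chi(S^\star) \ge 1-2\e$.

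Finally, since $\hat\chi(S^\star) = \mathbb{E}_x\bigl[(-1)^{g'(x) \oplus \langle S^\star, x\rangle}\bigr] = 1 - 2\,\Pr_x\bigl[g'(x) \neq \langle S^\star,x\rangle\bigr]$, the bound $\hat\chi(S^\star) \ge 1-2\e$ translates directly to $\Pr_x\bigl[g'(x) \neq \langle S^\star,x\rangle\bigr] \le \e$, i.e.\ $g'$ is $\e$-close to the linear function $x \mapsto \bigoplus_{i\in S^\star} x_i$, so $g$ is $\e$-close to the affine function obtained by adding the constant $g(0)$. The only substantive calculation is the sum-of-cubes Fourier identity; the reduction from affinity to linearity and the translation between $\pm 1$ and $\F$ agreement are routine, so I expect this is the step the paper will either cite or dispatch in a line.
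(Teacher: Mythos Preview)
Your proof is correct and is precisely the standard Fourier-analytic argument for BLR (reduce affinity to linearity by subtracting $g(0)$, then use the sum-of-cubes identity and Parseval). The paper, however, does not supply its own proof of this theorem: it is stated as a cited result from \cite{Blum1993}, with the Fourier-analytic proof attributed to \cite{BCHKS95} and \cite[Chapter~1]{Odonnellbook}. So your write-up is exactly the argument the paper is pointing to, just spelled out rather than referenced.
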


\subsection{Generalized Direct Product Test}\label{subsec:DPT}
\begin{definition} For $k,M,N_1,\ldots,N_k\in\mathbb{N}$, and $k$ functions $g_1,\dots,g_k:[N_i]\to[M]$, their direct product is the function $g:\prod_i[N_i]\to[M]^k$ denoted $g = g_1\times\dots\times g_k$ and defined as $g\left((x_1,\dots,x_k)\right) = (g_1(x_1),\dots,g_k(x_k))$. A function $g:\prod_i[N_i]\to[M]^k$, is called a direct product if there exist $k$ functions $g_1,\dots,g_k:[N_i]\to[M]$ such that $g = g_1\times \dots\times g_k$ for all $(x_1,\dots,x_k)\in \prod_i[N_i]$. 
\end{definition}
Dinur and Steurer~\cite{Dinur2014} presented a $2$-query test, very similar to Test~\ref{test:ds} below, that, with constant probability, distinguishes between direct products and functions that are far from direct product. 
\begin{figure}[h]\captionof{ctest}{Two-query test $\mathcal{T}(\alpha)$. Given a query access to a function $g: \prod_{i=1}^k[N_i]\to[M]^k$:}\label{test:ds}
\begin{itemize}
\item Choose $x\in \prod_{i=1}^k[N_i]$ uniformly.
\item For each $i$, with probability $\alpha$ set $y_i = x_i$ and add $i$ to $A$, and otherwise choose $y_i\in [N_i]$ uniformly.
\item Query $g$ at $x$ and $y$.
\item Accept iff $g(x)_A = g(y)_A$.
\end{itemize}
\end{figure}
The proof in \cite{Dinur2014} works for the special case of $N_1=\cdots=N_k$ and can easily be modified to work for the more general situation. Nevertheless, for completeness, we will rely on a newer and more general agreement theorem of \cite{DD} that directly implies what we need. 
\begin{theorem}[Generalized direct product testing theorem]\label{thm:ds} 
Let $k,M,N_1,\ldots,N_k\in\mathbb{N}$ be positive integers, and let $\e >0$. Let $g:\prod_i[N_i]\to [M]^k$ be a function that passes Test \ref{test:ds} with parameter $\alpha=0.75$ with probability at least $1-\e$.
Then there exist functions $h_i:[N_i]\to[M]$ such that 
\[\Pr_x \left[g(x)=(h_1(x),h_2(x),\ldots,h_k(x))\right] \ge 1-O(\e).\]
\end{theorem}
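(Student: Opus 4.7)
The plan is to recast Test \ref{test:ds} as an instance of the agreement test on a layered set system and then invoke the theorem of \cite{DD} as a black box.

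\textbf{Setup.} Let $V = \bigsqcup_{i=1}^{k} \bigl(\{i\} \times [N_i]\bigr)$ be the disjoint union of the coordinate sets, and for each $x \in \prod_i [N_i]$ let $S_x = \{(i, x_i) : i \in [k]\} \subseteq V$ be the corresponding \emph{transversal}, i.e., the $k$-element set picking exactly one vertex from each part. The function $g$ is equivalent to an ensemble of local functions $\{g_x : S_x \to [M]\}$ defined by $g_x(i, x_i) = g(x)_i$. A global function $h : V \to [M]$ corresponds precisely to a tuple $(h_1, \ldots, h_k)$ with $h_i(a) := h(i, a)$, and the desired conclusion $g(x) = (h_1(x_1), \ldots, h_k(x_k))$ is exactly the equality $g_x = h|_{S_x}$.

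\textbf{Test \ref{test:ds} as an agreement test.} The sampling rule in the test -- choose $x$ uniformly, pick $A \subseteq [k]$ by including each coordinate independently with probability $\alpha = 3/4$, and resample $y_i$ for $i \notin A$ -- produces two transversals $S_x, S_y$ whose intersection contains $\{(i, x_i) : i \in A\}$, with expected size $3k/4$. The acceptance condition $g(x)_A = g(y)_A$ is precisely the assertion that the local functions $g_x$ and $g_y$ agree on this common piece of their supports. This is exactly the shape of agreement test considered in \cite{DD}, applied to the product distribution on transversals.

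\textbf{Conclusion via the agreement theorem.} Applying the theorem of \cite{DD} to this ensemble, the hypothesis $\Pr[\text{test accepts}] \ge 1 - \e$ yields a global function $h : V \to [M]$ such that $g_x = h|_{S_x}$ for at least a $1 - O(\e)$ fraction of $x$. Setting $h_i := h(i, \cdot) : [N_i] \to [M]$ then gives $\Pr_x[g(x) = (h_1(x_1), \ldots, h_k(x_k))] \ge 1 - O(\e)$, as required.

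\textbf{Main obstacle.} The only delicate step is verifying that the pair distribution on transversals induced by retention probability $\alpha = 3/4$ meets the precise expansion/regularity hypotheses demanded by the agreement theorem of \cite{DD}; the constant $3/4$ is what places us in the ``high overlap'' regime in which the theorem delivers an $O(\e)$ (rather than merely $o(1)$) error bound, and the remainder of the argument is routine parameter matching, together with the harmless observation that even though $A$ may be a strict subset of the true coordinate-overlap $\{i : x_i = y_i\}$, testing agreement on $A$ can only make the test easier to pass, so the hypothesis transfers directly.
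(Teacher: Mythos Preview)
Your setup is essentially the paper's: encode $g$ as an ensemble of local functions on transversals $S_x$ of the complete $k$-partite complex, and then appeal to the agreement theorem of \cite{DD} to extract a global $h$. That part is fine. But two concrete pieces of work that the paper carries out are missing from your plan, and the first one you actually have backwards.

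\textbf{The overlap parameter must be \emph{decreased}, not kept at $3/4$.} The version of the \cite{DD} theorem that the paper invokes is for fixed intersection size $t$ and explicitly requires $t<k/4$. With $\alpha=0.75$ the expected intersection is about $3k/4$, which is on the wrong side of that constraint; your sentence that ``the constant $3/4$ is what places us in the high-overlap regime in which the theorem delivers an $O(\e)$ bound'' is the opposite of what is needed. The paper bridges this gap by a separate claim: if $g$ passes Test~\ref{test:ds} with parameter $\alpha$ then it also passes with parameter $\alpha^2$ (at the cost of doubling $\e$), proved by the standard trick of sampling $y,y'$ both correlated with a common pivot $x$. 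Iterating this three times brings $\alpha$ from $0.75$ down to $0.75^8<0.25$, and then a simple averaging argument extracts some fixed $t$ with $k/10<t<k/4$ on which the fixed-intersection test succeeds with probability $1-O(\e)$. Only then is the \cite{DD} theorem applied. Without this reduction your invocation of \cite{DD} is not justified.

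\textbf{The HDX hypothesis has to be verified, and the paper actually does it.} You flag the expansion requirement as ``the only delicate step'' and then stop. The paper computes the spectrum of the $1$-skeleton of the complete multi-partite complex (and hence of every link, since links are again complete multi-partite): one eigenvalue $1$, a large kernel, and the remaining eigenvalues all equal to $-1/(d-1)$. In particular all nontrivial eigenvalues are nonpositive, so the complex is a $\lambda$-one-sided HDX for every $\lambda\ge 0$, which is exactly what Theorem~4.4 of \cite{DD} asks for. This is a short but genuine computation, not ``routine parameter matching''.
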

We will show in Section \ref{sec:DP} how to derive the above theorem from the agreement theorem of \cite{DD}.

\subsection{Proof of Theorem~\ref{thm:sic}}\label{subsec:sic-proof}

For a positive integer $D$, we denote by $\mu_{\nicefrac{2}{3}}(\F^D)$ the distribution on $\F^D$, where each coordinate, independently, is equal to $0$ with probability $1/3$ and to $1$ with probability $2/3$.

We use the following proposition in the course of the proof.

\begin{proposition}\label{prop:emptyset} Let $S\subseteq [D]$ be a set and $\chi_S: \F^D\to\F$ be the corresponding linear function, i.e., $\chi_S(x) = \bigoplus_{i\in S}x_i$. Suppose
$$
\Pr_{x\sim \mu_{\nicefrac{2}{3}}(\F^D)}\left(\chi_S(x) = 0\right) > \frac{2}{3},
$$
then $S = \emptyset$.
\end{proposition}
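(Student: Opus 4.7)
The plan is to compute $\Pr_{x\sim\mu_{2/3}}(\chi_S(x)=0)$ explicitly as a function of $|S|$ using the standard $\pm 1$ Fourier trick, and then observe by elementary inspection that the only value of $|S|$ consistent with the hypothesis is $|S|=0$. Since the statement is a clean quantitative fact, there is no real obstacle; the only thing to be careful about is keeping track of signs.

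First I would rewrite the probability using the identity
\[
\Pr(\chi_S(x)=0) = \tfrac{1}{2}\bigl(1 + \mathbb{E}[(-1)^{\chi_S(x)}]\bigr).
\]
Since $(-1)^{\chi_S(x)} = \prod_{i\in S}(-1)^{x_i}$ and the coordinates $x_i$ are independent under $\mu_{2/3}(\F^D)$, the expectation factorizes:
\[
\mathbb{E}[(-1)^{\chi_S(x)}] = \prod_{i\in S}\mathbb{E}[(-1)^{x_i}] = \prod_{i\in S}\Bigl(\tfrac{1}{3}\cdot 1 + \tfrac{2}{3}\cdot(-1)\Bigr) = \Bigl(-\tfrac{1}{3}\Bigr)^{|S|}.
\]
Hence
\[
\Pr_{x\sim\mu_{2/3}}(\chi_S(x)=0) \;=\; \frac{1 + (-1/3)^{|S|}}{2}.
\]

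Next I would plug in small cases to bound this quantity when $|S|\ge 1$. If $|S|=1$ the probability equals $1/3$; if $|S|=2$ it equals $5/9$; and for $|S|\ge 3$ we have $|(-1/3)^{|S|}|\le 1/27$, so the probability is at most $14/27 < 5/9$. The maximum value over all $|S|\ge 1$ is therefore $5/9 < 2/3$. Since the hypothesis says the probability strictly exceeds $2/3$, we must have $|S|=0$, i.e.\ $S=\emptyset$, which completes the argument.
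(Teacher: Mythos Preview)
Your proof is correct and follows essentially the same approach as the paper: both pass to $(-1)^{\chi_S}$, use independence to factor the expectation as $(-1/3)^{|S|}$, and conclude $|S|=0$. The only cosmetic difference is in the final step: the paper takes absolute values to get $\tfrac{1}{3} < (1/3)^{|S|}$ directly, while you do a short case analysis on $|S|$; both reach the same conclusion.
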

\begin{proof}
Consider $(-1)^{\chi_S}$. Then
$$
\Pr_{x\sim \mu_{\nicefrac{2}{3}}(\F^D)}\left(\chi_S(x) = 0\right) = \Pr_{x\sim \mu_{\nicefrac{2}{3}}(\F^D)}\left((-1)^{\chi_S(x)} = 1\right).
$$
Also the following holds
$$
\frac{1}{3} < \left| 2 \Pr_{x\sim \mu_{\nicefrac{2}{3}}(\F^D)}\left((-1)^{\chi_S(x)} = 1\right) - 1 \right| =  \left| \mathbb{E}_{x\sim \mu_{\nicefrac{2}{3}}(\F^D)} (-1)^{\chi_S(x)} \right| =
$$
$$
\left| \prod_{i\in [D]} \mathbb{E}_{x_i\sim \mu_{\nicefrac{2}{3}}(\F)} (-1)^{x_i}\right| = \left| \left(-\frac{1}{3}\right)^{|S|}\right| =  \left(\frac{1}{3}\right)^{|S|},
$$
and the statement follows.
\end{proof}

\begin{proof} (of Theorem~\ref{thm:sic}.)
Assume Test~\ref{test:sic} rejects a function $f:[\overline{n};d]\to\F$ with probability less than $\e$, i.e.,
$$
\Pr_{\substack{a,b\sim [\overline{n};d] \\ x,y\sim C_{a,b}}}
\left( f_{a,b}(0)\oplus f_{a,b}(x)\oplus f_{a,b}(y)\oplus f_{a,b}(x\oplus y) = 0\right)  >1- \e,
$$
where all distributions are uniform, and $f_{a,b}$ is a shorthand for $f\circ \rho_{a,b}$. Then there exists $a\in [\overline{n};d]$ such that
$$
\Pr_{\substack{b\sim [\overline{n};d] \\ x,y\sim C_{a,b}}}
\left( f_{a,b}(0)\oplus f_{a,b}(x)\oplus f_{a,b}(y)\oplus f_{a,b}(x\oplus y) = 0\right) > 1-\e.
$$

Note that the operations re-indexing the domain $[\overline{n};d]$\footnote{By this we mean selecting permutations $\pi_i$ on $[n_i]$ for $i=1,\dots,d$, and setting $f^{\pi_1,\dots,\pi_d}\left(x_1,\dots,x_d\right) = f\left(\pi_1(x_1),\dots,\pi_d(x_d)\right)$}, as well as {\em flipping} a function, i.e., adding the constant one function to it element-wise, preserve the distance between functions. Hence, w.l.o.g. we can assume for convenience that $a = (0,\dots,0)$ and that $f(a) = 0$. 

We write $C_b$ for $C_{a,b}$ and $f_{b}$ for $f_{a,b}$. Then for every $b\in [\overline{n};d]$,
$$
\Pr_{\substack{ x,y\sim C_{b}}}
\left( f_{b}(0)\oplus f_{b}(x)\oplus f_{b}(y)\oplus f_{b}(x\oplus y) = 0\right)=1-\e_b.
$$

The BLR theorem (Theorem~\ref{thm:BLR}) implies that for each $b\in [\overline{n};d]$ there exists a subset $S(b)\subseteq\Delta(a,b)$, such that
$$
\Pr_{\substack{ x\sim C_{b}}}
\left( f_{b}(x) = \chi_{S(b)}(x) \right)=1-\e_b.
$$
\begin{remark}
By the BLR theorem, there should be the \textquote{greater or equal to} sign instead of the equality. We assume equality for convenience.
\end{remark}

Let $F:[\overline{n};d]\to\F^d$ be a function defined as follows. For each $b\in[\overline{n};d]$, the set $S(b)\subseteq \Delta(a,b)$ can be viewed as a subset of $[d]$, since $\Delta(a,b)\subseteq [d]$. Then $F(b)$ is defined as the element of $\F^d$ corresponding to the set $S(b)$.

We now show that $F$ passes Test~\ref{test:ds} with high probability and hence is close to a direct product.

Let $b\in [\overline{n};d]$ be chosen uniformly at random, and let $b'\in[\overline{n};d]$ be chosen with respect to the following distribution $D(b)$. For each $i\in [d]$,
$$
b_i' =\begin{cases}
	b_i,\, & \text{w.p. } \nicefrac{3}{4};\\
	\text{chosen uniformly at random from } [n]\setminus\{b_i\},\, & \text{w.p. } \nicefrac{1}{4}.
	\end{cases}
$$
Note that the distribution on pairs $(b,b')$, where $b$ is chosen uniformly from $[\overline{n};d]$ and $b'$ w.r.t. $D(b)$, is equivalent to the following: for each $i\in [d]$,
\begin{equation}
\begin{cases}
	b_i = b_i'\text{ chosen uniformly from }[n],\, & \text{w.p. } \nicefrac{3}{4};\\
	b_i \neq b_i'\text{ both chosen uniformly from }[n] \, & \text{w.p. } \nicefrac{1}{4}.
\end{cases}
\end{equation}
In particular, it is symmetric in the sense that choosing $b'\sim[\overline{n};d]$ uniformly at random first, and then $b\sim D(b')$, leads to the same distribution on pairs $(b,b')$ as the one described above.

For such a pair $(b,b')$ define distribution $\mathcal{D}_{b,b'}$ on $[\overline{n};d]$ as follows. For a vector $x\sim \mathcal{D}_{b,b'}$,
$$
x_i =\begin{cases}
		 0,\, & \text{if } i\in\Delta(b,b');\\
		\begin{aligned}
		& 0,\, & \text{w.p. } \nicefrac{1}{3};\\
	    & b_i = b_i'  \, & \text{w.p. } \nicefrac{2}{3}.
		\end{aligned} & \text{if } i\not\in\Delta(b,b').
	\end{cases}
$$
Note that the distribution $\mathcal{D}_{b,b'}$ is supported on a binary cube of dimension $d-|\Delta(b,b')|$ inside $[\overline{n};d]$.
Denote
$$
\e_{b,b'} = \Pr_{x\sim \mathcal{D}_{b,b'}}\left( f(x) \neq \chi_{F(b)(x)} \right).
$$
We claim that the following holds
\begin{equation}\label{eq:exp_eps}
\e_b = \Pr_{x\sim C_{b}}\left( f(x) \neq \chi_{F(b)(x)} \right)  = \mathop{\mathbb{E}}_{b'\sim D(b)}  \e_{b,b'}.
\end{equation}
To see~(\ref{eq:exp_eps}) note that since $b$ is chosen uniformly, $b'$ is chosen w.r.t. $D(b)$, and $x\sim \mathcal{D}_{b,b'}$, the resulting distribution for $x$ is
$$
x_i =\begin{cases}
		0,\, & \text{w.p. } \nicefrac{1}{2};\\
		b_i \, & \text{w.p. } \nicefrac{1}{2},
	\end{cases}
$$
which is exactly the uniform distribution on $C_{b}$.

We now show that
\begin{equation}\label{inq:Markov}
\Pr_{\substack{b\sim [\overline{n};d] \\ b'\sim D(b)}} \left( \e_{b,b'} + \e_{b',b} > \frac{1}{3}\right) < 6\e
\end{equation}
First note that it follows from the definitions that
$$
 \mathop{\mathbb{E}}_{b\sim [\overline{n};d]} \mathop{\mathbb{E}}_{b'\sim D(b)} \e_{b,b'} =
  \mathop{\mathbb{E}}_{b\sim [\overline{n};d]} \e_{b} = \e.
$$
And by the symmetry of the distribution on pairs $(b,b')$,
$$
 \mathop{\mathbb{E}}_{b\sim [\overline{n};d]} \mathop{\mathbb{E}}_{b'\sim D(b)} \e_{b',b} =   \mathop{\mathbb{E}}_{b'\sim D(b)} \mathop{\mathbb{E}}_{b\sim [\overline{n};d]} \e_{b',b} = \e.
$$
Combined together, the previous two equations imply that
$$
 \mathop{\mathbb{E}}_{b\sim [\overline{n};d]} \mathop{\mathbb{E}}_{b'\sim D(b)} \left(\e_{b,b'}+\e_{b',b}\right) = 2\e,
$$
and by the Markov inequality, Inequality~\ref{inq:Markov} follows.
By the definition of $\e_{b,b'}$,
$$
\Pr_{x\sim \mathcal{D}_{b,b'}} \left( \chi_{F(b)}(x) = \chi_{F(b')}(x) \right) > 1 - \left(\e_{b,b'} + \e_{b',b}\right).
$$
which is equivalent to
$$
\Pr_{x\sim \mathcal{D}_{b,b'}} \left( \chi_{F(b)\Delta F(b')}(x) = 1 \right) > 1 - \left(\e_{b,b'} + \e_{b',b}\right).
$$
Proposition~\ref{prop:emptyset} implies that if $1 - \left(\e_{b,b'} + \e_{b',b}\right) > \frac{2}{3}$, then
$$
F(b)_{C_b\cap C_{b'}} = F(b')_{C_b\cap C_{b'}}.
$$

By Theorem~\ref{thm:ds}, the function $F:[\overline{n};d]\to\F^d$ is close to a direct product, i.e., there exist $d$ functions $F_1,\dots,F_d:[n]\to\F$ such that
$$
\Pr_{b\sim[\overline{n};d]} \left( F(b)=\left(F_1(b_1),\dots,F_d(b_d)\right)\right) \geq 1-O(\e).
$$

Therefore,
$$
\Pr_{b\sim[\overline{n};d]} \left( f(b)=\bigoplus_{i=1}^{d}F_i(b_i) \right) \geq 1-O(\e).
$$

\end{proof}
\section{The Shapka Test}\label{sec:shapkatest}
In this section we present a different test for whether a tensor is a tensor product. It queries the tensor at $(d+2)$ places at most, but the proof is simpler than for the previous test.  

In~\cite{Kaufman2014}, Kaufman and Lubotzky showed an interesting connection between the theory of high-dimensional expanders and property testing. Namely, they showed that $\F$-coboundary expansion of a $2$-dimensional complete simplicial complex implies testability of whether a symmetric $\F$-matrix is a tensor square of a vector. The following test is inspired by their work and in a way generalizes it. However, since the description below does not employ neither terminology nor machinery of high-dimensional expanders, we refer to~\cite{Kaufman2014} for the connection between this theory and property testing.

Given two strings $a,b\in[\overline{n};d]$, for $i\in[d]$ denote by $a_b^i\in[\overline{n};d]$ the vector which coincides with $a$ in every coordinate except for the $i$-th one, where it coincides with $b$, i.e.,
\begin{equation*}
 (a_b^i)_j =  \begin{cases}
						a_j,\, &\text{if } j\neq i;\\
						b_i,\, &\text{if } j=i.
					\end{cases}
\end{equation*}
For a string $a\in[\overline{n};d]$, and a number $x\in[n_i]$, we write $a_x^i$ for the string which is equal to $a$ in every coordinate except for the $i$-th one, where it is equal to $x$, i.e.,
$$
a_x^i = (a_1,\dots,a_{i-1},x,a_{i+1},\dots,a_d).
$$

\begin{figure}[h]\captionof{ctest}{The Shapka Test. Given a query access to a function $f: [\overline{n};d]\to\F$:}\label{test:hyp} 
\begin{enumerate}
\item Choose $a,b\in [\overline{n};d]$ uniformly at random.
\item Define the query set $Q_{a,b}\subseteq [\overline{n};d]$ to consist of $a$, $a_b^j$ for all $j\in [d]$, and also $b$ if $d$ is even.
\item Query $f$ at the elements of $Q_{a,b}$.
\item Accept iff $\bigoplus_{q\in Q_{a,b}} f(q) =0$.
\end{enumerate}
\end{figure}

\begin{remark}
Shapka is the Russian word for a winter hat (derived from Old French \textit{chape} for a \textit{cap}). The name \textit{the Shapka test} comes from the fact that the set $Q_{a,b}$ consists of the two top layers of the induced binary cube $C_{a,b}$ (and also the bottom layer if $d$ is even).
\end{remark}

\begin{theorem}\label{thm:hyp}
Suppose a function  $f:[\overline{n};d]\to\F$ passes Test~\ref{test:hyp} with probability $1-\e$ for some $\e > 0$, then $f$ is $\e$-close to a tensor product.
\end{theorem}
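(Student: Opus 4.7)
The plan is to exhibit, from a single well-chosen reference point $b^*\in[\overline{n};d]$, an explicit direct sum $g:[\overline{n};d]\to\F$ that agrees with $f$ on at least a $(1-\e)$-fraction of inputs.

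First, pick the reference. Since the test accepts with probability at least $1-\e$ over the independently uniform pair $(a,b)$, averaging over the second argument gives some $b^*\in[\overline{n};d]$ with
\[
\Pr_{a\sim[\overline{n};d]}\Bigl[\bigoplus_{q\in Q_{b^*,a}}f(q)=0\Bigr]\ge 1-\e.
\]
Next, build the candidate. For each $j\in\{1,\dots,d\}$ and $x\in[n_j]$, set $h_j(x):=f\bigl((b^*)_x^j\bigr)$, where $(b^*)_x^j$ denotes $b^*$ with its $j$-th coordinate replaced by $x$; by construction each $h_j$ depends on exactly one coordinate. Define
\[
g(a) := f(b^*)\oplus\bigoplus_{j=1}^d h_j(a_j),
\]
which is a direct sum (the constant $f(b^*)$ can be absorbed into any one $h_j$).

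The closeness of $g$ and $f$ is then immediate from the test identity. The key observation is that $(b^*)_a^j=(b^*)_{a_j}^j$, so $f\bigl((b^*)_a^j\bigr)=h_j(a_j)$. The query set $Q_{b^*,a}$ consists of $b^*$, the $d$ points $(b^*)_a^j$, and also $a$ itself when $d$ is even; hence, for $d$ even, the test XOR on the pair $(b^*,a)$ equals
\[
f(b^*)\oplus\bigoplus_{j=1}^d h_j(a_j)\oplus f(a) \;=\; g(a)\oplus f(a),
\]
which vanishes precisely when $f(a)=g(a)$. Combining with the first step gives $\Pr_a[f(a)=g(a)]\ge 1-\e$, i.e.\ $\dist(f,g)\le\e$.

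The main (and essentially only) subtlety is the parity of $d$. For $d$ odd, $a\notin Q_{b^*,a}$, so the test equation on $(b^*,a)$ does not mention $f(a)$ at all; the conditional inclusion of $b$ in $Q_{a,b}$ exactly when $d$ is even is designed precisely to repair this. For odd $d$ I would handle the case symmetrically by picking the reference via averaging over the first argument and analysing the test equation on the pair $(a,b^*)$, adjusting the additive constant in $g$ accordingly. Apart from this parity bookkeeping no further analytic machinery is required, which is exactly why the Shapka test is advertised as easier to analyse than the square-in-a-cube test: the proof is a one-line consequence of the test identity applied at the right reference point, with no appeal to BLR or to direct-product testing.
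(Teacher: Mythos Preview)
Your argument for even $d$ is correct and is essentially the paper's proof, only packaged differently. The paper defines, for \emph{every} reference point $a$, the direct sum $g^a:=f_1^a\oplus\cdots\oplus f_d^a$ with $f_i^a(x)=f(a_x^i)$ (absorbing the constant $f(a)$ into $f_d^a$ when $d$ is even), checks the identity $\bigoplus_{q\in Q_{a,b}}f(q)=(f\oplus g^a)(b)$, and then averages: since each $g^a$ is a direct sum, $\Pr_b[(f\oplus g^a)(b)=1]\ge\delta$ for every $a$, hence $\e\ge\delta$. You instead first average to single out one good reference; the underlying computation is identical.

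Your proposed handling of odd $d$, however, is wrong. If you fix the \emph{second} argument $b^*$ and let the first argument $a$ vary, then the queried points $a_{b^*}^{\,j}$ are obtained from $a$ by overwriting only the $j$-th coordinate with $b^*_j$; hence $f\bigl(a_{b^*}^{\,j}\bigr)$ depends on \emph{all} coordinates of $a$ except the $j$-th. The sum $\bigoplus_{j=1}^d f\bigl(a_{b^*}^{\,j}\bigr)$ is therefore not a direct sum in $a$ in general, and the test identity on the pair $(a,b^*)$ does not express $f(a)$ as a direct sum plus a constant. The asymmetry of the Shapka test is the whole point here: the points $a_b^j$ are one step away from $a$, so each is a function of a \emph{single} coordinate of the second argument $b$, not of the first; swapping the roles of the two arguments destroys exactly this property.

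The fix is not to swap but to keep the reference in the first slot for both parities, exactly as in your even-$d$ argument and as the paper does. Parity only affects whether the additive constant $f(\text{reference})$ is present; it does not change which argument should be fixed.
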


\begin{proof}
Let $\delta$ be the relative Hamming distance from $f$ to the subspace of direct sums, i.e., for every direct sum $g:[\overline{n};d] \to \F$ it holds that
$$
Pr_{x\sim [\overline{n};d]}\left(f(x) \neq g(x) \right) \geq \delta.
$$

For a vector $a\in [\overline{n};d]$, let us define the local view of $f$ from $a$, that is $d$ functions $f_1^a,\dots,f_d^a$, where $f_i^d:[n_i]\to \F,\,i=1,\dots,d$, that are defined as follows. For $1\leq i \leq d-1$, and $x\in [n_i]$, 
$$
f^a_i (x) = f(a_{x}^i).
$$
For $i=d$, the definition of $f^a_d:[n_d]\to\F$ depends on the parity of $d$ and goes as follows
\begin{equation}\nonumber
\begin{cases}
f^a_d(x) = f(a_{x}^d),\, & \text{if }d\text{ is odd}, \\
f^a_d(x) = f(a_{x}^d) \oplus  f(a),\, & \text{if }d\text{ is even}.
\end{cases}
\end{equation}

Given a collection of $d$ functions, $g_i:[n_i]\to \F,\, i=1,\dots,d$, recall that their direct sum is the function $g_1\oplus\dots\oplus g_d$ such that for a vector $x\in[\overline{n};d]$ the following holds
$$
g_1\oplus\dots\oplus g_d = \bigoplus_{i\in[d]} g_i(x_i).
$$
The following holds for any $[\overline{n};d]$,
\begin{equation}\label{eq:coboundary}
\left(f - f^a_1\oplus\dots\oplus f^a_d)\right)(b_1,\dots,b_d) = \bigoplus_{q\in Q_{a,b}} f(q).
\end{equation}
As $f^a_1\oplus\dots\oplus f^a_d$ is a direct sum, it is at least $\delta$-far from $f$, and hence for any $a\in [\overline{n};d]$,
\begin{equation}\label{eq:coboundary-exp}
\Pr_{b\sim [\overline{n};d]} \left( \left(f - f^a_1\oplus\dots\oplus f^a_d\right)(b) = 1 \right) \geq \delta.
\end{equation}
Assume now that $f$ fails Test~\ref{test:hyp} with probability $\e$, i.e.,
$$
\e = \Pr_{a,b\sim [\overline{n};d]} \left( \bigoplus_{q\in Q_{a,b}} f(q) = 1 \right).
$$
Combining this equality with (\ref{eq:coboundary}) and (\ref{eq:coboundary-exp}), we get the following
$$
\e = \mathop{\mathbb{E}}_{a\sim[\overline{n};d]} \Pr_{b\sim [\overline{n};d]} \left( \left(f -f_a^1\oplus\dots\oplus f_a^d\right)(b_1,\dots,b_d) = 1 \right)
\geq   \left(\mathop{\mathbb{E}}_{a\sim[\overline{n};d]} \delta \right)= \delta,
$$
which completes the proof.
\end{proof}

\section{Generalized direct product test}\label{sec:DP}

In this section we prove Theorem~\ref{thm:ds}, restated directly below, by relying on known agreement test results. \\~\\
\noindent {\bf Theorem \ref{thm:ds}~(restated)~}{\it 
Let $k,M,N_1,\ldots,N_k\in\mathbb{N}$ be positive integers, and let $\e >0$. Let $g:\prod_i[N_i]\to [M]^k$ be a function that passes Test \ref{test:ds} with parameter $\alpha=0.75$ with probability at least $1-\e$.
Then there exist functions $h_i:[N_i]\to[M]$ such that 
\[\Pr_x \left[g(x)=(h_1(x),h_2(x),\ldots,h_k(x))\right] \ge 1-O(\e).\]
}

This theorem was proven ``in spirit'' in \cite{Dinur2014} although formally that proof is written only for the case of $N_1=N_2=\cdots=N_k$. Instead of reworking the details we will rely on a newer work that generalizes the \cite{Dinur2014} paper to a broader context of agreement testing.

First, let us move from the distribution of Test \ref{test:ds} to a related distribution. It turns out that if $g$ passes one of these two-query tests with good probability then we can draw conclusions regarding its success in related tests.
\begin{figure}[h]\captionof{ctest}{Two-query test with fixed intersection size $\mathcal{T}(t)$. Given $g:\prod_i[N_i]\to[M]^k$}\label{test:ds-fixed}
\begin{itemize}
\item Choose $x \in \prod_i[N_i]$ uniformly.
\item Choose a subset $T\subset [k]$ of size $t$ uniformly.
\item Choose $y \in \prod_i[N_i]$ uniformly conditioned on $y|_T = x|_T$.
\item Accept iff $g(x)|_T=g(y)|_T$.
\end{itemize}
\end{figure}
\begin{claim}\label{claim:param} 
Suppose $g$ passes Test \ref{test:ds} with $\alpha = 0.75$ with probability $1-\e$ then it passes Test \ref{test:ds-fixed} with parameter $k/10< t < k/4$ probability $1-O(\e)$.
\end{claim}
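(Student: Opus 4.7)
My plan is to reduce the claim to two simpler components: a composition lemma for Test~\ref{test:ds}, then a concentration step relating Test~\ref{test:ds}$(\alpha)$ to Test~\ref{test:ds-fixed}$(t)$ for $t\approx \alpha k$.

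The composition lemma I would prove is: if $g$ passes Test~\ref{test:ds}$(\alpha_1)$ with probability $1-\e_1$ and Test~\ref{test:ds}$(\alpha_2)$ with probability $1-\e_2$, then $g$ passes Test~\ref{test:ds}$(\alpha_1\alpha_2)$ with probability at least $1-(\e_1+\e_2)$. The proof chains two independent draws: first sample $(x,A_1,y)$ from Test~\ref{test:ds}$(\alpha_1)$ and then, using that $y$ is uniformly distributed marginally, sample $(y,A_2,z)$ from Test~\ref{test:ds}$(\alpha_2)$. Setting $A_3 = A_1 \cap A_2$, each coordinate lies in $A_3$ independently with probability $\alpha_1\alpha_2$, and unfolding the definitions shows $(x,A_3,z)$ is distributed exactly as a Test~\ref{test:ds}$(\alpha_1\alpha_2)$ sample. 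A triangle step closes the argument: if the composed test rejects, i.e., $g(x)_{A_3}\neq g(z)_{A_3}$, then either $g(x)_{A_3}\neq g(y)_{A_3}$ (forcing the first test to reject, since $A_3\subseteq A_1$) or $g(y)_{A_3}\neq g(z)_{A_3}$ (forcing the second test to reject, since $A_3\subseteq A_2$); a union bound yields the rejection bound $\e_1+\e_2$.

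Iterating the composition lemma starting from Test~\ref{test:ds}$(0.75)$, I get Test~\ref{test:ds}$(0.75^j)$ passing with rejection probability at most $j\e$. Taking $j=5$ gives $0.75^5 \approx 0.237 \in (1/10, 1/4)$ with rejection $O(\e)$. The final step expresses Test~\ref{test:ds}$(\alpha')$ as a mixture $\sum_m \Pr[\mathrm{Bin}(k,\alpha')=m]\cdot \text{Test~\ref{test:ds-fixed}}(m)$ over the size $m=|A|$; binomial concentration localises the support to a narrow window around $\alpha' k \approx 0.237 k$. By averaging, there is some $t$ in this window for which Test~\ref{test:ds-fixed}$(t)$ rejects with probability $O(\e)$, and for $k$ large enough the window sits inside $(k/10, k/4)$ as required.

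The hard part will be controlling this last step tightly enough to cover every prescribed $t$ in the range $(k/10,k/4)$, not just some $t$ near the concentration point. To reach an arbitrary $t$ in the range, I would chain Test~\ref{test:ds-fixed}$(m_i)$ instances (obtained from Test~\ref{test:ds}$(0.75)$ by conditioning on a typical value of $|A|$) rather than equal copies of Test~\ref{test:ds}$(0.75^j)$: pick $m_1,\dots,m_j$ in the typical range $[0.7k,\,0.8k]$ so that the product $m_1 m_2\cdots m_j / k^{j-1}$ is within $O(\sqrt{k})$ of the target $t$, run the same intersection-of-overlaps composition trick on them, and apply hypergeometric concentration of $|T_1\cap\dots\cap T_j|$ to land a fixed-size sample. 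This should give Test~\ref{test:ds-fixed}$(t)$ passing with rejection $O(\e)$ for every $t$ in the claimed range, completing the proof.
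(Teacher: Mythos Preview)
Your first three paragraphs are essentially the paper's proof: the paper proves exactly your composition lemma (stated for $\alpha\mapsto\alpha^2$ rather than general $\alpha_1\alpha_2$, using the shared point $x$ rather than $y$, but the argument is identical), iterates it a few times to bring $\alpha$ below $1/4$, and then uses the mixture decomposition plus binomial concentration to extract \emph{some} $t$ in the desired window with rejection probability $O(\e)$. Your choice of $0.75^5\approx 0.237$ is in fact more comfortably inside $(1/10,1/4)$ than the paper's $0.75^8\approx 0.1001$.

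The only misstep is your reading of the claim in the final paragraph. The claim (despite its awkward phrasing) asserts only that there \emph{exists} a $t$ with $k/10<t<k/4$ for which Test~\ref{test:ds-fixed}$(t)$ passes with probability $1-O(\e)$; this is all that is needed downstream, since Theorem~\ref{thm:a} is applied for a single value of $t<k/4$. The paper's own proof stops exactly where your third paragraph stops, with the averaging argument producing one good $t$. Your proposed extension to every $t$ in the range is unnecessary, and the sketch you give for it (chaining fixed-size tests and controlling the intersection size via hypergeometric concentration) would need real work to make rigorous---but you should simply drop it.
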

We prove this claim later in Section \ref{subsec:param}.
Theorem \ref{thm:ds} will follow by invoking a theorem from \cite{DD} about agreement testing. 
In agreement testing the input is a collection of local functions each defined on its own small domain. The agreement test checks that whenever the small domains overlap the functions agree with each other. An agreement theorem deduces a single global function (on a domain that contains all the smaller ones) from the given local pairwise agreements. To see who are the small domains in our context let us construct the following set system. 
\begin{itemize}
\item Vertices: Let $V_1,\ldots,V_k$ be $k$ disjoint sets of vertices, $|V_i|=N_i$ and we identify $V_i$ with $[N_i]$.
\item Subsets: We have a subset for every choice of one element from each $V_i$,
\[ {\cal S}= \{ \set{v_1,\ldots,v_k}\; : \; \forall i=1,\ldots,k,\;v_i \in V_i\}. \]
There is a straightforward bijection between ${\cal S}$ and the domain of $g$, namely $\prod_i [N_i]$.
\item Local functions:
For a set $S=\set{v_1,\ldots,v_k} \in {\cal S}$ we have a local function $f_S:S\to [M]$ defined by 
\[f_S(v_i) = g(\bar v_1,\ldots,\bar v_k)_i\] 
where $\bar v_i\in [N_i]$ is associated with $v_i$ in the identification of $V_i$ and $[N_i]$.
\end{itemize}
A direct product function $g:\prod_i[N_i]\to [M]^k$ can thus be represented as a collection $\set {f_S}$ of local functions. The direct product test, Test~\ref{test:ds-fixed}, can be rephrased as Test \ref{test:a} below. 
Given $g:\prod_i[N_i]\to[M]^k$ we view it as a family of local functions $\set{f_S}$ and would like to invoke the following agreement test theorem,
\begin{figure}[h]\captionof{ctest}{Two-query test $\mathcal{T}(t)$. Given a family of local functions $\{f_S\in [M]^S\;:\; S\in {\cal S}\}$}\label{test:a}
\begin{itemize}
\item Choose a set $S_1 \in {\cal S}$ uniformly.
\item Choose a subset $T\subset S_1$ of size $t$ uniformly.
\item Choose $S_2\in {\cal S}$ uniformly conditioned on $S_2\supset T$.
\item Accept iff $f_{S_1}|_T=f_{S_2}|_T$.
\end{itemize}
\end{figure}
\begin{theorem}[{\cite[Theorem 4.4]{DD}}]\label{thm:a}
Suppose $\S$ is a collection of subsets that are top faces of a $\lambda$-one-sided $k$-partite $\frac 1 {k^3}$-high dimensional expander. Then given $\set{f_S}$ for which Test \ref{test:a} succeeds with probability $1-\e$, and assuming $t<k/4$, there exists a function $h:V_1\sqcup\cdots\sqcup V_k\to[M]$ such that 
\[ Pr_{S\in {\cal S}}[f_S = h|_S] \ge 1-O(\e).
\]
\end{theorem}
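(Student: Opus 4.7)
The plan is to prove Theorem~\ref{thm:a} along the standard path for agreement theorems on high-dimensional expanders: explicitly construct a global candidate function $h$ from the collection $\set{f_S}$ by a weighted plurality rule, then exploit the spectral expansion of the HDX to argue that $h$ must agree with $f_S$ on most $S$. Each hypothesis plays a distinct role in this route — the one-sided $\lambda$-expansion of the 1-skeleton governs the base-level mixing, the $\frac{1}{k^3}$-HDX condition propagates agreement through the links at every level, and $t < k/4$ ensures the test extracts enough information about local disagreement.

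First I would define $h : V_1 \sqcup \cdots \sqcup V_k \to [M]$ by $h(v) = \arg\max_{m\in[M]} \Pr_{S \ni v}[\,f_S(v) = m\,]$, where $S$ is sampled from the natural measure on top faces through $v$. This is the rule that must recover the correct $h$ if any $h$ with $f_S = h|_S$ for most $S$ exists. Next, translate the success of Test~\ref{test:a} into a quantitative agreement statement along the swap walk on top faces with intersection size $t$: namely $f_{S_1}|_T = f_{S_2}|_T$ with probability $1-\varepsilon$ when $(S_1,T,S_2)$ is drawn from this walk.

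The core local-to-global step uses the spectral gap of this swap walk. Because the complex is $\frac{1}{k^3}$-HDX at every level, the swap walk on top faces that meet in a $t$-set has small second eigenvalue whenever $t$ is not too close to $k$, which is exactly what the hypothesis $t < k/4$ is calibrated for. A Fourier or second-moment argument on this walk, applied to the indicator of the event $\{f_S(v) = h(v)\}$, should show that for all but an $O(\varepsilon)$ fraction of top faces $S$, the restriction $f_S$ agrees with $h|_S$ on a $1 - O(\varepsilon)$ fraction of the vertices of $S$.

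The principal obstacle is upgrading this approximate agreement to exact agreement: for most $S$, we have $f_S = h|_S$ on most of $S$, but we need full equality on most $S$. The standard remedy, and the one I would pursue, is an inductive argument through the links of the HDX — each vertex link is again a $\frac{1}{k^3}$-HDX of lower dimension inheriting the one-sided $\lambda$-expansion, so the same machinery can be reapplied one level down to force the remaining disagreements to concentrate on an $O(\varepsilon)$-fraction of top faces. Carefully tracking how $\lambda$, $k$, and the intersection parameter $t$ interact through this induction — and in particular verifying that the error constant does not blow up with the dimension — is where I expect the bulk of the technical work (and the heavy lifting done by \cite{DD}) to lie.
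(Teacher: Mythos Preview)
The paper does not prove Theorem~\ref{thm:a} at all: it is quoted verbatim as \cite[Theorem~4.4]{DD} and used as a black box. What the paper actually does in Section~\ref{sec:DP} is (i) reduce Test~\ref{test:ds} with $\alpha=0.75$ to Test~\ref{test:ds-fixed} with some $t<k/4$ (Claim~\ref{claim:param}), (ii) recast the direct-product instance as an agreement-testing instance on the complete $k$-partite complex, and (iii) verify that this complex is a $0$-one-sided HDX so that Theorem~\ref{thm:a} applies. There is no proof of the agreement theorem itself to compare your proposal against.

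Your sketch is a reasonable high-level outline of how agreement theorems on HDXs are typically proved, and indeed you correctly anticipate that the heavy lifting is in \cite{DD}. But as a comparison task there is nothing to compare: you are proposing to reprove a cited external result, whereas the paper simply invokes it. If the intent was to supply what the paper supplies for this statement, the correct answer is ``nothing --- cite \cite{DD} and move on''; if the intent was to reconstruct the proof in \cite{DD}, that would require engaging with the actual machinery there (their specific local-to-global lemma and the sampler/trickling-down arguments), which your proposal gestures at but does not pin down.
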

We will show in Section \ref{subsec:complete}  that we are justified to apply this theorem because our collection of subsets, also known as the ``complete multi-partite complex'', is a $\lambda$-one-sided-HDX for any $\lambda\ge 0$. Assuming this is the case, we can now take $h_i = h|_{V_i}$ and get the desired conclusion of Theorem \ref{thm:ds},
\[\Pr_x [ g(x) = (h_1(x_1),\ldots,h_k(x_k)) ] = \Pr_S [ f_S = h|_S ] \ge 1-O(\e) .
\]
\subsection{Moving between different variants of agreement tests}\label{subsec:param}

Claim \ref{claim:param} follows immediately from the following lemma, (one needs to apply the first item 3 times to get from $\alpha=0.75$ to $\alpha^2$ then $\alpha^4$ and then $\alpha^8 < 0.25$ and then item 2 once).
\begin{lemma}
Let $g:\prod_i[N_i]\to [M]^k$ be a function that passes Test \ref{test:ds} with parameter $\alpha$ with probability at least $1-\e$. Then,
\begin{itemize}
\item $g$ passes  Test \ref{test:ds} with parameter $\alpha^2$ with probability at least $1-2\e$. 
\item There exists a number $t$, $\alpha k - \sqrt k \le t \le \alpha k+\sqrt k$, such that $g$ passes Test \ref{test:ds-fixed} with parameter $t$ with probability at least $1-O(\e)$
\end{itemize}
\end{lemma}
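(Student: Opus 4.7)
The strategy is a tensorization / averaging argument in each item, exploiting the fact that Test~\ref{test:ds} is both composable (each coordinate acts independently) and a natural mixture of the fixed-intersection tests.

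\textbf{Item 1 (from $\alpha$ to $\alpha^2$).} The plan is to couple two independent runs of Test~\ref{test:ds}$(\alpha)$ and read off a run of Test~\ref{test:ds}$(\alpha^2)$ from the composition. Draw $x$ uniformly, draw $y_1$ together with a kept-set $A_1$ from Test$(\alpha)$ started at $x$, and draw $y_2$ together with $A_2$ from Test$(\alpha)$ started at $y_1$. Set $A:=A_1\cap A_2$; because $A_1,A_2$ are independent with each coordinate included independently with probability $\alpha$, each coordinate lies in $A$ independently with probability $\alpha^2$. Examine $(x,y_2,A)$ coordinate by coordinate: if $i\in A$ then $y_{2,i}=y_{1,i}=x_i$, while if $i\notin A$ then $y_{2,i}$ is uniform in $[N_i]$ and independent of $x_i$ (one checks the three cases $i\in A_1\setminus A_2$, $i\in A_2\setminus A_1$, $i\notin A_1\cup A_2$ by inspection). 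Thus $(x,y_2,A)$ has exactly the distribution of Test~\ref{test:ds}$(\alpha^2)$. Finally, if both constituent tests accept, then $g(x)|_{A_1}=g(y_1)|_{A_1}$ and $g(y_1)|_{A_2}=g(y_2)|_{A_2}$; restricting both equalities to $A$ and chaining yields $g(x)|_A=g(y_2)|_A$, i.e.\ the Test$(\alpha^2)$ run accepts. A union bound over the two runs gives failure probability at most $2\e$.

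\textbf{Item 2 (fixing the intersection size).} The observation is that Test~\ref{test:ds}$(\alpha)$ is a mixture: conditioned on $A$ having size $t$, the set $A$ is uniform among size-$t$ subsets of $[k]$, $y|_A=x|_A$, and $y|_{A^c}$ is uniform and independent, which is exactly the law of Test~\ref{test:ds-fixed}$(t)$. Let $q_t$ denote the failure probability of Test~\ref{test:ds-fixed}$(t)$; then $\sum_t\Pr[|A|=t]\,q_t\le\e$, where $|A|\sim\mathrm{Bin}(k,\alpha)$. Because $\mathrm{Var}(|A|)=k\alpha(1-\alpha)\le k/4$, Chebyshev's inequality gives $\Pr\bigl[\,||A|-\alpha k|>\sqrt k\,\bigr]\le 1/4$, so the typical range $T:=\{t:|t-\alpha k|\le\sqrt k\}$ carries mass at least $3/4$. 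Averaging over $t\in T$,
\[
\min_{t\in T} q_t \;\le\; \frac{\sum_{t\in T}\Pr[|A|=t]\,q_t}{\sum_{t\in T}\Pr[|A|=t]} \;\le\; \frac{\e}{3/4} \;=\; \tfrac{4}{3}\e,
\]
so some $t^*\in T$ works with failure probability $O(\e)$, as required.

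\textbf{Expected sticking point.} Item 2 is a routine concentration-plus-averaging argument. The content is in item 1, and specifically in verifying that the coupling $(x,y_1,A_1),(y_1,y_2,A_2)\mapsto (x,y_2,A_1\cap A_2)$ reproduces the $\alpha^2$-distribution on the nose; this is where one must be careful that ``uniform resampling'' truly commutes with the composition rather than introducing a bias on coordinates where only one of $A_1,A_2$ keeps $x_i$. Once that coordinate-by-coordinate check is in hand, everything else is a union bound.
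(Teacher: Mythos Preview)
Your proposal is correct and follows essentially the same plan as the paper: a two-step coupling for item~1 and a concentration-plus-averaging argument for item~2. The only cosmetic difference is that for item~1 you use a \emph{chain} coupling $x\to y_1\to y_2$ and output $(x,y_2,A_1\cap A_2)$, whereas the paper uses a \emph{V}-coupling, drawing two $\alpha$-pairs $(x,y)$ and $(x,y')$ sharing the same $x$ and outputting $(y,y',A\cap A')$; both yield the $\alpha^2$ distribution and the same union bound. For item~2 your Chebyshev bound and weighted-average inequality are exactly the paper's argument (the paper cites Hoeffding but uses it identically).
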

\begin{proof}
We first prove the first item. Choosing two queries $x,y$ according to the test distribution in Test \ref{test:ds} and then another pair $x,y'$ conditioned on the first query being $x$, we get a pair $y,y'$ whose distribution is exactly as if the were chosen from Test \ref{test:ds} with parameter $\alpha^2$. Suppose $A$ was the set of indices in which $y_i$ was chosen to equal $X_i$, and suppose $A'$ was that set for the pair $x,y'$. Setting $B=A\cap A'$  it remains to notice that the event that $g(y)|_B \neq g(y')|_B$ is contained in at least one of the events $g(y)|_A \neq g(x)|_A$ or $g(y')|_{A'} \neq g(x)|_{A'}$, so its probability is at most $2\e$.

For the second item, observe that with probability $p>0.1$ the size of the set $A$ defined by the test is some $t$ such that $\alpha k - \sqrt k \le t \le \alpha k+\sqrt k$ (this follows from Hoefding's tail bound). There must be some $t$ in this range for which the failure probability of the test is at most $2\e/p$. Otherwise, even if the test succeeds with probability $1$ when $t$ is outside this range, we would still not be able to reach a sucess probability of $1-\e$ since 
\[ \Pr[fail] \ge p\cdot 2\e/p >\e 
\]
\end{proof}
\subsection{The complete multi-partite complex}\label{subsec:complete}
The collection of subsets defined in the beginning of this section gives rise to the so-called complete multi-partite simplicial complex, by downwards closing that set system. 

We wish to show that it satisfies the requirements of Theorem \ref{thm:a}. For this we briefly recall the relevant definitions. For a more comprehensive introduction to this topic we refer the reader to \cite{DD} and the references therein. 
\begin{itemize}
\item Simplicial Complex: A simplicial complex is a hypergraph that is closed downward with respect to containment. It is $(d-1)$-dimensional if the largest hyperedge has size $d$. We refer to $X(\ell)$ as the hyperedges (also called faces) of size $\ell+1$. $X(0)$ are the vertices. It is $d$-partite if the vertices are partitioned into $d$ parts, and each hyperedge in $X(d-1)$ has one vertex from each part. 
\item Link: Given a $i$-face $\sigma$, the link of $\sigma$ is the collection of faces that are disjoint from $\sigma$ and whose union belongs to $X$, 
\[
X_\sigma = \set{ \tau \in X \;:\; \tau \cap \sigma=\phi \hbox { and } \tau\cup\sigma\in X}.\]
This is a simplicial complex whose dimension is $dim(X) - |\sigma| - 1$. 
\item Distribution: Given any probability distribution on the top faces $X(d-1)$, it propagates to a distribution on the edges by selecting a top face and then a pair of vertices in it uniformly. This gives a weighted graph that is called the {\bf $1$-skeleton} of the complex.
\item HDX: A $(d-1)$-dimensional simplicial complex is a $\lambda$-one-sided HDX if for every face $\sigma\in X(t)$, $t \le d-3$, the $1$-skeleton of the link $X_\sigma$ is a $\lambda$-one-sided expander graph, meaning that the random walk Markov chain on this weighted graph has all non-trivial normalized eigenvalues at most $\lambda$. 
\end{itemize}
The complete $d$-partite complex has parameters $n_1,\ldots,n_d$ and has a vertex set $V_i$ of size $n_i$. It is defined by the following distribution over $d$-hyperedges: For each $i$ choose $x_i \in V_i$ uniformly. This gives a probability distribution on faces $\set{x_1,\ldots,x_d}$ in $X(d-1)$. The $1$-skeleton of this complex is a graph whose vertices are $V_1\sqcup\cdots\sqcup V_d$ and whose weighted edges are obtained by selecting a random hyperedge in $X(d)$ and then a random pair of vertices inside it. The link of a face in this complex is itself a complete partite complex, with fewer parts. To show that this complex is a $\lambda$-one-sided HDX it remains to prove the following lemma,
\begin{lemma}
Let $G$ be the $1$-skeleton of a complete $d$-partite complex with parameters $n_1,\ldots,n_d$. Then the normalized adjacency matrix of $G$ has one eigenvalue of $1$, eigenvalue of $0$ with multiplicity $\sum_i n_i - d$, and the remaining $(d-1)$ eigenvalues have value $-1/(d-1)$. 

In particular, except for one eigenvalue of $1$, all of $G$'s remaining eigenvalues are non-positive. 
\end{lemma}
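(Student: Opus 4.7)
The plan is to write down the transition matrix of the random walk on $G$ explicitly and diagonalize it by decomposing $\mathbb{R}^V$ into two natural invariant subspaces. Starting from a vertex $v \in V_i$, a random neighbor is obtained by first choosing a top face through $v$—equivalently, one vertex from each $V_j$ with $j\neq i$—and then a uniform other vertex in that face. This gives transition probabilities $P(v,u) = \frac{1}{(d-1)n_j}$ whenever $v\in V_i,\,u\in V_j,\,j\neq i$ and $0$ otherwise, so that for every $f:V\to\mathbb{R}$,
\[
(Pf)(v) = \frac{1}{d-1}\sum_{j\neq i}\bar f_j, \qquad v\in V_i,
\]
where $\bar f_j = \frac{1}{n_j}\sum_{u\in V_j}f(u)$. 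The key observation is that $Pf$ depends only on the $d$ averages $\bar f_1,\ldots,\bar f_d$.

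Next I would split $\mathbb{R}^V = W_0 \oplus W_1$, where $W_0$ is the space of functions whose average on each part vanishes and $W_1$ is the $d$-dimensional space of functions that are constant on each $V_i$. These are orthogonal with respect to the stationary-distribution inner product $\pi(v)=\frac{1}{dn_i}$ for $v\in V_i$, and $\dim W_0 = \sum_i n_i - d$, $\dim W_1 = d$. The formula above immediately gives $PW_0 = 0$, accounting for the eigenvalue $0$ with the asserted multiplicity $\sum_i n_i - d$.

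It then remains to analyze $P$ on $W_1$. Identifying $f\in W_1$ with the vector $(c_1,\ldots,c_d)$ of its part-values, $P$ acts as the $d\times d$ matrix $\frac{1}{d-1}(J-I)$, where $J$ is the all-ones matrix. Diagonalizing this is standard: the all-ones vector is an eigenvector with eigenvalue $1$, and its $(d-1)$-dimensional orthogonal complement lies in the kernel of $J$, giving eigenvalue $-\frac{1}{d-1}$ with multiplicity $d-1$. Combining the three pieces yields the claimed spectrum.

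There is no real obstacle here; the only mild subtlety is being careful about normalizations—namely, checking that $W_0\oplus W_1$ really is an orthogonal decomposition with respect to the stationary inner product, so that the eigenvalue counts add up correctly, and verifying that the effective matrix on $W_1$ is symmetric under this inner product so that reading off its eigenvalues is legitimate.
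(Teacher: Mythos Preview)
Your proposal is correct and follows essentially the same route as the paper: compute the transition probabilities $\frac{1}{(d-1)n_j}$, observe that $Pf$ is constant on each part (so the kernel has dimension $\sum_i n_i - d$), and reduce the action on part-constant functions to the $d\times d$ matrix $\frac{1}{d-1}(J-I)$. Your closing worry about orthogonality is unnecessary---since $W_0$ and $W_1$ are complementary $P$-invariant subspaces, the spectrum of $P$ is simply the union (with multiplicity) of the spectra of the restrictions, with no inner product needed.
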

\begin{proof}
 Let, as before, $V_i$ denote the part of vertices of size $n_i$. The distribution on edges induced by the uniform distribution on the maximal faces is as follows. For an edge $(v_i,v_j)$, where $v_i\in V_i,\,v_j\in V_j$ and $i\neq j$, its probability is equal to
 \[
  p(v_i,v_j) = p_{i,j}=\frac{1}{{d \choose 2} n_i n_j}.
 \]
Hence the transition probability of moving from the vertex $v_i$ to the vertex $v_j$ is equal to 
\[
 \frac{p_{i,j}}{\sum_{j=1,\,j\neq i}^d n_j p_{i,j}} = \frac{p_{i,j}}{2/(dn_i)} = \frac{1}{(d-1)n_j},
\]
The transition matrix is of the following form
\[
A =  \frac{1}{d-1} \begin{bmatrix}
    0 & \frac{1}{n_2} J_{n_1\times n_2} & \frac{1}{n_3} J_{n_1\times n_3} & \dots & \frac{1}{n_d} J_{n_1\times n_d} \\
    \frac{1}{n_1} J_{n_2\times n_1} & 0 & \frac{1}{n_3} J_{n_2\times n_3} & \dots & \frac{1}{n_d} J_{n_2\times n_d} \\
    \hdotsfor{5} \\
    \frac{1}{n_1} J_{n_d\times n_1} & \frac{1}{n_2} J_{n_d\times n_2}     & \frac{1}{n_3} J_{n_d\times n_3} & \dots &   0
\end{bmatrix},
\]
where $J_{n_i\times n_j}$ stands for the all-one matrix of size $n_i \times n_j$. In order to show that $A$ has a single positive eigenvalue, we use the approach developed in~\cite{EssHar80}.  First, note that the multiplicity of $0$ is $n - d$, where $n = \sum_{i=1}^d n_i$, because the matrix $A$ is of rank $n-d$. Next, note that if $f$ is an eigenfunction with eigenvalue $\lambda\neq 0$, then 
\begin{enumerate}
 \item it is constant on $V_i$ for each $i=1,\dots,d$;
 \item and 
 $$
 \lambda \alpha_i = \frac{1}{d-1} \sum_{j=1,\, j\neq i}^d  \alpha_j,
 $$
 where $\alpha_i$ is the value of $f$ on $V_i$.
\end{enumerate}
For $v\in V_i$, 
\[
 \lambda f(v) = \frac{1}{d-1}\sum_{j=1,j\neq i}^{d}\left(\frac{1}{n_j}\sum_{u\in V_j} f(u)\right).
\]
The expression on r.h.s. is the same for every $v\in V_i$, and $\lambda \neq 0$, which completes the proof of (1). To show (2), it is enough to substitute $f(u) = \alpha_j $ for $u\in V_j$ in the equality above. 

It follows from the above that the non-zero eigenvalues of $A$ are exactly the eigenvalues of the matrix
 \[
 \frac{1}{d-1}\left(J_{d\times d}-I_{d\times_d}\right),
 \]
which has eigenvalue $1$ with multiplicity $1$, and $-\frac{1}{d-1}$ with multiplicity $(d-1)$. 
\end{proof}

\section{Further Directions}\label{sec:FurDirec}
Below we present possible directions for future research.

\begin{enumerate}
\item Can the original function $f: [\overline{n};d]\to \F$ be reconstructed by a voting scheme using the Shapka Test~\ref{test:hyp}?
\item It is plausible that the Square in the Cube test~\ref{test:sic} can be analyzed by the Fourier transform approach similarly to the analysis of the BLR test.
\item Another test in the spirit of the paper is the following.
\begin{figure}[h]\captionof{ctest}{Given a query access to a function $f: [\overline{n};d]\to\F$:}\label{test:hyp} 
\begin{enumerate}
\item Choose $a,b\in [\overline{n};d]$ uniformly at random.
\item Choose $x\in C_{a,b}$ uniformly at random.
\item Query $f$ at $\rho_{a,b}(0),\rho_{a,b}(x),\rho_{a,b}(1)$ and $\rho_{a,b}(x\oplus 1)$.
\item Accept iff $f(\rho_{a,b}(0))\oplus f(\rho_{a,b}(x))\oplus f(\rho_{a,b}(1))\oplus f(\rho_{a,b}(x\oplus 1))=0$.
\end{enumerate}
\end{figure}
We conjecture that this test is also good, i.e., if a function passes the test with high probability then it is close to a tensor product.
\end{enumerate}

\section*{Acknowledgements} 
The authors would like to thank Oded Goldreich for pointing out a gap in the proof in  a previous version of this manuscript. 

The first author is supported by ERC-CoG grant number 772839. A substantial part of the work was done while the second author held a joint postdoctoral position at The Weizmann Institute and Bar-Ilan University funded by the ERC grant number 336283. Currently, the second author is supported by the SNF grant number 200020\_169106. The second author would also like to thank the Swiss Mathematical Society for travel funding related to this paper.

\bibliographystyle{alpha}

\end{document}